\documentclass[pdflatex,sn-mathphys-num]{sn-jnl}

\usepackage{graphicx}%
\usepackage{multirow}%
\usepackage{amsmath,amssymb,amsfonts}%
\usepackage{amsthm}%
\usepackage{mathrsfs}%
\usepackage[title]{appendix}%
\usepackage{xcolor}%
\usepackage{textcomp}%
\usepackage{manyfoot}%
\usepackage{booktabs}%
\usepackage{algorithm}%
\usepackage{algorithmicx}%
\usepackage{algpseudocode}%
\usepackage{listings}%

\usepackage{amsfonts}
\usepackage{amsthm}
\usepackage{verbatim}
\usepackage{amsmath, amssymb}
\usepackage{tikz}
\usetikzlibrary{matrix, arrows}
\usepackage{listings}
\usepackage{color}
\usepackage{listings}
\usepackage[all]{xy}
\usepackage{bigints}
\usepackage{dsfont}

\theoremstyle{thmstyleone}%
\theoremstyle{thmstyletwo}%
\newtheorem{remark}{Remark}%

\theoremstyle{thmstylethree}%

\title{Pure or Unstable: A Generic Dichotomy for Strong Stackelberg Commitments}

\begin{document}


\newcommand{\cA}{\mathcal{A}}
\newcommand{\cB}{\mathcal{B}}
\newcommand{\cC}{\mathcal{C}}
\newcommand{\cD}{\mathcal{D}}
\newcommand{\cE}{\mathcal{E}}
\newcommand{\cF}{\mathcal{F}}
\newcommand{\cG}{\mathcal{G}}
\newcommand{\cH}{\mathcal{H}}
\newcommand{\cI}{\mathcal{I}}
\newcommand{\cJ}{\mathcal{J}}
\newcommand{\cK}{\mathcal{K}}
\newcommand{\cL}{\mathcal{L}}
\newcommand{\cM}{\mathcal{M}}
\newcommand{\cN}{\mathcal{N}}
\newcommand{\cO}{\mathcal{O}}
\newcommand{\cP}{\mathcal{P}}
\newcommand{\cQ}{\mathcal{Q}}
\newcommand{\cR}{\mathcal{R}}
\newcommand{\cS}{\mathcal{S}}
\newcommand{\cT}{\mathcal{T}}
\newcommand{\cU}{\mathcal{U}}
\newcommand{\cV}{\mathcal{V}}
\newcommand{\cW}{\mathcal{W}}
\newcommand{\cX}{\mathcal{X}}
\newcommand{\cY}{\mathcal{Y}}
\newcommand{\cZ}{\mathcal{Z}}
\newcommand{\bA}{\mathbb{A}}
\newcommand{\bB}{\mathbb{B}}
\newcommand{\bC}{\mathbb{C}}
\newcommand{\bD}{\mathbb{D}}
\newcommand{\bE}{\mathbb{E}}
\newcommand{\bF}{\mathbb{F}}
\newcommand{\bG}{\mathbb{G}}
\newcommand{\bH}{\mathbb{H}}
\newcommand{\bI}{\mathbb{I}}
\newcommand{\bJ}{\mathbb{J}}
\newcommand{\bK}{\mathbb{K}}
\newcommand{\bL}{\mathbb{L}}
\newcommand{\bM}{\mathbb{M}}
\newcommand{\bN}{\mathbb{N}}
\newcommand{\bO}{\mathbb{O}}
\newcommand{\bP}{\mathbb{P}}
\newcommand{\bQ}{\mathbb{Q}}
\newcommand{\bR}{\mathbb{R}}
\newcommand{\bS}{\mathbb{S}}
\newcommand{\bT}{\mathbb{T}}
\newcommand{\bU}{\mathbb{U}}
\newcommand{\bV}{\mathbb{V}}
\newcommand{\bW}{\mathbb{W}}
\newcommand{\bX}{\mathbb{X}}
\newcommand{\bY}{\mathbb{Y}}
\newcommand{\bZ}{\mathbb{Z}}

\newcounter{dummy} \numberwithin{dummy}{section}

\theoremstyle{definition}
\newtheorem{mydef}[dummy]{Definition}
\newtheorem{prop}[dummy]{Proposition}
\newtheorem{corol}[dummy]{Corollary}
\newtheorem{thm}[dummy]{Theorem}
\newtheorem{lemma}[dummy]{Lemma}
\newtheorem{eg}[dummy]{Example}
\newtheorem{notation}[dummy]{Notation}
\newtheorem{claim}[dummy]{Claim}
\newtheorem{Exercise}[dummy]{Exercise}
\newtheorem{question}[dummy]{Question}
\newtheorem{conjecture}[dummy]{Conjecture}

\newcommand{\bcr}{\begin{color}{red}}
\newcommand{\ec}{\end{color}}

\author[1]{\fnm{Kamil} \sur{Bulinski}}\email{kamil.bulinski@adelaide.edu.au}

\author[1]{\fnm{Lang} \sur{White}}\email{lang.white@adelaide.edu.au}

\author*[1]{\fnm{Hung} \sur{Nguyen}}\email{hung.nguyen@adelaide.edu.au}

\affil*[1]{\orgdiv{School of Computer Science and Information Technolgy}, \orgname{Adelaide University}, \orgaddress{\street{Street}, \city{City}, \postcode{5005}, \state{South Australia}, \country{Australia}}}




\abstract{We study the robustness of the Strong Stackelberg Equilibrium (SSE) in finite leader--follower games when the follower's best-response correspondence is set-valued. While optimistic tie-breaking (in the leader's favor) is commonly adopted, it can hinge on knife-edge indifferences. We formalize a stability notion: an SSE is \emph{unstable} if, at the leader’s committed strategy, the follower has an alternative best response that strictly reduces the leader’s payoff. Our main results establish a sharp generic dichotomy. Fixing the follower’s utility and sampling the leader’s utility from any continuous distribution, with probability one the optimal Stackelberg commitment is unique and is either (i) pure, or (ii) mixed and unstable. When both players’ utilities are sampled generically, this strengthens to: with probability one, the unique optimal commitment is either pure and stable or mixed and unstable. These theorems complement the classic generic-value result of von Stengel and Zamir by showing that even when optimistic and pessimistic leader values coincide generically, the \emph{strategy-level} SSE prediction is generically fragile whenever optimality requires genuine randomization. We further apply this perspective to Stackelberg satisfaction games, disproving a conjecture from prior work via counterexamples and identifying conditions under which it nonetheless holds.}

\keywords{Strong Stackelberg Equilibrium, Instability of Stackelberg commitments, Tie-breaking robustness}

\maketitle

\section{Introduction}

Stackelberg (leader--follower) games model strategic environments in which one player (the \emph{leader}) can commit to a course of action before another player (the \emph{follower}) moves after observing that commitment. Originating in oligopoly theory and the study of first-mover advantage, the Stackelberg paradigm has long served as a canonical model of strategic commitment in economics and game theory; see, e.g., the classic treatment in \cite{FudenbergTirole1991} and modern discussions of leadership/commitment in strategic-form games \cite{StengelZamir2010}. In many settings, commitment can strictly benefit the leader relative to simultaneous-move play, and this benefit can be enhanced when the leader may commit to a mixed (randomized) strategy \cite{ConitzerSandholm2006,StengelZamir2010,FarinaMarchesiKroerGattiSandholm2018}.

In finite two-player Stackelberg games, once the leader commits to a mixed strategy, the follower selects a best response. A central modeling issue is that the follower's best-response correspondence may be set-valued, so the predicted leader payoff depends on how ties are resolved among follower best responses. Two standard conventions are the \emph{strong} (optimistic) and \emph{weak} (pessimistic) variants: in a \emph{Strong Stackelberg Equilibrium} (SSE) the follower breaks ties in the leader's favor, whereas in a \emph{Weak Stackelberg Equilibrium} (WSE) ties are broken against the leader \cite{FarinaMarchesiKroerGattiSandholm2018}. In algorithmic game theory, SSE has become the dominant solution concept, partly because it offers a clean optimistic benchmark and is amenable to computation in important classes of games \cite{ConitzerSandholm2006,StengelZamir2010}.

The optimistic tie-breaking convention is especially prevalent in \emph{Stackelberg security games}, where a defender commits to a randomized allocation or patrol schedule and an attacker best responds after observing (or learning) that strategy \cite{Tambe2011,AnTambeSSGOverview}. This modeling framework has motivated a large literature on algorithms and deployments, including systems for airport security, air marshal scheduling, port protection, and transit fare inspection \cite{Tambe2011,YinJiangJohnsonTambe2012TRUSTS}. At the same time, the tie-breaking assumption is not merely a technicality: in security games with additional feasibility constraints, the defender may be unable to induce a preferred best response via an infinitesimal perturbation, and optimistic SSE predictions may therefore be overly optimistic \cite{GuoEtAl2018InducibleAAMAS}. These considerations make it important to understand when and how optimistic equilibrium predictions are robust to indifference and small modeling perturbations.

A celebrated justification for optimistic tie-breaking is the generic-games analysis of von Stengel and Zamir \cite{StengelZamir2010}. In the mixed extension of finite bimatrix games, they show that for generic payoff matrices the leader's optimistic value (the \emph{Highest Leader Payoff}) coincides with the leader's pessimistic value (the \emph{Lowest Leader Payoff}), suggesting that optimistic tie-breaking does not change the attainable payoff frontier for ``most'' games \cite{StengelZamir2010}. However, equality of these benchmark values does not address the \emph{robustness} of the \emph{optimizing commitment} itself: an SSE recommendation may rest on knife-edge indifferences in the follower response correspondence, so that alternative best replies (still optimal for the follower) can yield sharply different leader payoffs.

Motivated by this gap, we study \emph{stability} properties of Strong Stackelberg equilibria. Informally, an SSE is \emph{unstable} if at the leader's committed strategy the follower has another best response that strictly decreases the leader's payoff. This stability notion is aligned with robustness concerns that have driven recent work incorporating bounded rationality, observation noise, or worst-case response deviations in Stackelberg models \cite{RobustSolutions2010,GanHanWuXu2025RSE}. Such robustness perspectives have been empirically and theoretically motivated in security settings where human followers may deviate from perfectly optimal best responses \cite{RobustSolutions2010,YangKiekintveldOrdonezTambeJohn2013HumanBehavior,CernyEtAl2021QuantalStackelbergEFG}.

\medskip
\noindent
\textbf{Main results.}
Our first main theorem (Theorem~\ref{thm: main thm random game}) establishes a sharp generic dichotomy. It states that if one fixes the follower utility function and samples the leader utility function from any continuous distribution then, with probability one, the optimal Stackelberg leader strategy is \emph{unique} and is either (i) \emph{pure}, or (ii) \emph{mixed and unstable}. Equivalently, if the game admits an optimal SSE commitment that requires genuine randomization, then generically this SSE is fragile: there exists an alternative follower best reply at that commitment that harms the leader.
We then show in Theorem~\ref{thm: u_L and u_F randomized} that when both players' utilities are sampled generically, this dichotomy strengthens: with probability one, the unique optimal commitment is either \emph{pure and stable} or \emph{mixed and unstable}.
These results complement the generic-value insight of \cite{StengelZamir2010}: while optimistic and pessimistic values may coincide generically, the \emph{strategy-level} prediction given by SSE is generically non-robust whenever it is genuinely mixed. Furthermore, Theorem~\ref{thm: quantitative bound} provides a quantitative version of our main results by showing that the gap in leader payoffs at an unstable impure SSE is non-negligible.

While our analysis focuses on finite normal-form Stackelberg games, related existence and computational issues for strong Stackelberg concepts in discounted stochastic games have been studied recently (e.g., stationary SSSE) \cite{BucareyEtAl2023SSSE}. More broadly, a growing literature studies \emph{robust} Stackelberg notions that model follower near-optimality or indifference directly, such as $\delta$-robust Stackelberg equilibrium and related formulations based on $\delta$-best responses~\cite{GanHanWuXu2025RSE}. Our generic instability theorem suggests that these robustifications are not only useful in constrained domains, but are generically relevant even in standard finite Stackelberg games: when the optimal optimistic commitment is mixed, robustness to small deviations in the follower’s response rule becomes structurally important.

Finally, recent learning and inference results further underscore that small deviations and bounded rationality are inevitable in deployed systems. For example, contextual/online formulations study Stackelberg learning with side information and regret guarantees~\cite{HarrisWuBalcan2024}, while inverse Stackelberg / inverse game-theoretic approaches highlight the role of bounded rationality (e.g., quantal response) in enabling learnability of follower preferences~\cite{WuShenFangXu2022}. Related work on no-regret learning in Stackelberg-like settings similarly treats deviations from exact best responses as intrinsic to the model~\cite{LiuRong2024,GoktasZhaoGreenwald2022}. In this context, our instability results provide theoretical backing for the pessimistic and robust modeling choices adopted in that literature: the optimistic SSE prescription is generically non-robust precisely when it relies on genuine randomization.

\medskip
\noindent
\textbf{Applications to satisfaction games.} We further apply our stability analysis to \emph{Stackelberg satisfaction games}, in which the follower's payoff is thresholded into a binary satisfaction indicator. Recent work introduced satisfaction and regret-based perspectives in Stackelberg games and conjectured that the leader's optimistic Stackelberg payoff is no worse in the Satisfaction game, with a proof for the case where the leader is restricted to pure commitments \cite{Satisfaction2024}. We show that the conjecture is false in general by providing explicit counterexamples, and we identify conditions under which it does hold. In particular, we show that whenever the original game admits an optimal \emph{pure} Stackelberg leader strategy, the conjecture holds; combined with our generic dichotomy, this yields an ``almost sure'' validation of the conjecture on the stable (typically pure) side of the generic regime.

We remark that since Stackelberg satisfaction games are Stackelberg games where the follower utility is $\{0,1\}$-valued, our Theorem~\ref{thm: main thm random game} applies (since only the leader utility has to be randomly sampled from a continuous distribution). Consequently, this implies that Robustness techniques of \cite{GanHanWuXu2025RSE} are essential and indeed applicable to Stackelberg satisfaction games. 

\textbf{Real-world applications.} Stackelberg security-game models have also supported operational decision-making in deployed systems for airport security (e.g.\ \textsc{ARMOR} at LAX) \cite{PitaEtAl2008AAAI_ARMOR,JainEtAl2010Interfaces}, air-marshal scheduling (e.g.\ \textsc{IRIS}) \cite{TsaiEtAl2009IRIS,JainEtAl2010Interfaces}, national-scale airport security resource allocation (e.g.\ \textsc{GUARDS}) \cite{PitaEtAl2011GUARDS_AAMAS}, maritime patrol planning for the U.S.\ Coast Guard (e.g.\ \textsc{PROTECT}) \cite{AnEtAl2012AIMag_PROTECT,AnEtAl2013INFORMS_PROTECT}, public-transit fare inspection (e.g.\ \textsc{TRUSTS}) \cite{YinEtAl2012TRUSTS}, wildlife protection patrol planning (e.g.\ \textsc{PAWS}) \cite{FangEtAl2016IAAI_PAWS} and in cybersecurity~\cite{ngo2024catch}. Recent optimization work also revisits the computation of SSE in large constrained security games via mixed-integer formulations and perfect-formulation approaches \cite{BucareyEtAl2023PerfectFormulations}.

For application domains such as security games, our results caution against overreliance on optimistic tie-breaking whenever the SSE-optimal recommendation is genuinely mixed: in that regime, the leader’s prescribed commitment is generically fragile, so small perturbations or alternative follower best replies can substantially reduce the leader’s realized payoff. 
In addition, our analysis of Satisfaction games sharpens and corrects the existing literature: we provide explicit counterexamples to the conjectured increase of leader payoff in the Satisfaction game, while also identifying a clean and interpretable sufficient condition—existence of a pure optimal Stackelberg commitment—under which the conjectured increase does hold. Since pure optimal commitments are generically prevalent on the stable side of our dichotomy, this condition offers a practically meaningful guideline for when the leader payoff is expected to not be worse in the Satisfaction game. This is significant because randomization of defender strategies is argued as a main motivation in security games.

\section{Background and statement of new results}

\subsection{Basic setup: Stackelberg Games and SSE} 

Let $S_L$ and $S_F$ denote the finite sets of pure strategies for the leader and follower respectively. Let $\Delta(S_L)$, $\Delta(S_F)$ be the probability measures on $S_L$ and $S_F$, i.e., the mixed strategies. Let $$u_L,u_F: S_L \times S_F \to \bR$$ be the utility functions of the leader and follower. As usual, we extend these utility functions to functions on $\Delta(S_L) \times \Delta(S_F)$ by defining
$$u(\sigma_L, \sigma_F) = \sum_{s_L \in S_L} \sum_{s_F \in S_F} \sigma_L(s_L) \, \sigma_F(s_F) \, u(s_L,s_F),\quad \text{for } \sigma_L \in \Delta(S_L), \sigma_F \in \Delta(S_F),$$

which is the expected utility when the leader plays a mixed (randomized) strategy $\sigma_L$, and the follower plays a mixed strategy $\sigma_F$ independently. 

Let
$$BR_{u_F}(\sigma_L) = \{ \sigma_F \in \Delta(S_F) ~|~ u_F(\sigma_L, \sigma_F) \geq u_F(\sigma_L, \sigma'_F) \quad \text{for all } \sigma'_F \in \Delta(S_F) \}$$ denote the Best Response set of the follower when the leader commits to the mixed strategy $\sigma_L \in \Delta(S_L)$.

\begin{remark} Throughout this note, we consider $S_L$ and $S_F$ fixed once and for all, however we will consider different utilities $u_F$ for the follower, hence we stress the dependence on the best response set on these.

\end{remark}

\begin{mydef}[Stackelberg Equilibrium]\label{def: SSE} Recall that a pair of mixed strategies $(\sigma^*_L, \sigma^*_F)$ is a \textbf{Strong Stackelberg equilibrium} (w.r.t $u_L, u_F$) if 
\begin{equation} \label{eqn: SSE} u_L(\sigma^*_L, \sigma^*_F) = \max_{\sigma_L \in \Delta(S_L)} \max_{\sigma_F \in BR_{u_F}(\sigma_L)} u_L(\sigma_L, \sigma_F). \end{equation}
We call such a $\sigma_L^* \in \Delta(S_L)$ an \textbf{optimal Stackelberg leader strategy}.
Note that the second maximum says that we break ties in favour of the leader.

\end{mydef}

\begin{mydef}\label{def: lowest leader payoff} The quantity $$\cL(u_L,u_F) = \sup_{\sigma_L \in \Delta(S_L)} \min_{\sigma_F \in BR_{u_F}(\sigma_L)} u_L(\sigma_L, \sigma_F)$$ has been studied in \cite{StengelZamir2010}, and it is known as the \textbf{Lowest Leader Payoff}. Similarly, the \textbf{Highest Leader Payoff} is defined to be \begin{equation} \label{eqn: highest leader payoff} \cH(u_L, u_F) = \max_{\sigma_L \in \Delta(S_L)} \max_{\sigma_F \in BR_{u_F}(\sigma_L)} u_L(\sigma_L, \sigma_F) \end{equation} 
which is the leader payoff at a Strong Stackelberg Equilibrium.

\end{mydef}

\subsection{Background: Tie-breaking assumptions and generic games}

In the literature it is often assumed that the Stackelberg follower breaks ties in favour of the leader \cite{Conitzer2016, Korzhyk2010, StengelZamir2010}. Note that this is implicit in the definition of the Strong Stackelberg Equilibrium given above. One may wonder whether this is a reasonable assumption. A quite strong justification for this assumption appears in the works of von Stengel and Zamir \cite{StengelZamir2010}, which states that for \textbf{generic games} the Highest Leader Payoff is equal to the Lowest Leader Payoff, where \textit{generic} means \textit{on a dense set of matrices}. To make this precise, observe that the set of all utilities functions $S_F \times S_L \to \bR$ can be identified with a Euclidean space $\bR^{S_L \times S_F} \cong \bR^{|S_L||S_F|}$, thus it makes sense to say that a set of utility functions is dense. The aforementioned theorem of von Stengel and Zamir may be stated as follows

\begin{thm}[Proposition 11 of \cite{StengelZamir2010}]\label{thm: StengelZamir H=L} There is a dense subset $\cU$ of $\bR^{S_L \times S_F} \times \bR^{S_L \times S_F}$ such that for all $(u_L, u_F) \in \cU$ we have  $$\cH(u_L, u_F) = 
\cL(u_L, u_F).$$ That is, $\cH(u_L, u_F) = \cL(u_L, u_F)$ holds for generic $u_L,u_F$.

\end{thm}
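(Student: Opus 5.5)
Since $\cL \le \cH$ always holds (the inner minimum is at most the inner maximum, and the supremum over $\sigma_L$ is at most the maximum), the plan is to find a dense set of pairs $(u_L,u_F)$ on which $\cL \ge \cH$. I will take $\cU$ to be the set of pairs where $u_F$ is \emph{generic in the polyhedral sense} and $u_L$ has distinct entries. For $u_F$ this means that for every $j \in S_F$ the best-response region
$$P_j = \{\sigma_L \in \Delta(S_L) : j \in BR_{u_F}(\sigma_L)\}$$
is either empty or full-dimensional in $\Delta(S_L)$. In addition, any point of $\Delta(S_L)$ lies on at most $|S_L|-1$ linearly independent tie hyperplanes $\{u_F(\cdot,j)=u_F(\cdot,k)\}$ together with boundary faces. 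All of these are complements of finitely many proper algebraic (linear-in-$u_F$) conditions, so $\cU$ is open and dense, indeed of full measure.

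\textbf{Key step: approaching the optimum from the interior of a region.} Fix $(u_L,u_F)\in\cU$ and an SSE $(\sigma_L^*, j^*)$; the follower's best reply can be taken pure, since $u_L(\sigma_L^*,\cdot)$ is linear over the face $BR_{u_F}(\sigma_L^*)$. Then $\sigma_L^* \in P_{j^*}$, and by genericity $P_{j^*}$ is a full-dimensional polytope. Choose a point $\tau$ in the interior of $P_{j^*}$ at which $j^*$ is the \emph{unique} best response. Such a point exists: the set of ties with other responses inside $P_{j^*}$ is contained in finitely many hyperplanes, none of which contains all of $P_{j^*}$. That last fact again uses genericity, namely that no two follower columns coincide on the simplex.

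Next set $\sigma_\varepsilon = (1-\varepsilon)\sigma_L^* + \varepsilon\tau$. By convexity of $P_{j^*}$ we have $\sigma_\varepsilon \in P_{j^*}$. Moreover, for every $k \ne j^*$,
$$u_F(\sigma_\varepsilon, j^*) - u_F(\sigma_\varepsilon,k) = (1-\varepsilon)\bigl(\text{a term} \ge 0\bigr) + \varepsilon\bigl(\text{a term} > 0\bigr) > 0,$$
so $BR_{u_F}(\sigma_\varepsilon)=\{j^*\}$. Hence
$$\min_{\sigma_F\in BR(\sigma_\varepsilon)} u_L(\sigma_\varepsilon,\sigma_F) = u_L(\sigma_\varepsilon,j^*) \to u_L(\sigma_L^*,j^*) = \cH \quad \text{as } \varepsilon\to 0,$$
by continuity. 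Taking the supremum over $\varepsilon$ gives $\cL \ge \cH$, and therefore $\cL=\cH$.

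\textbf{Main obstacle.} The step I expect to be hardest is the genericity reduction: showing that the set of $u_F$ for which every nonempty $P_j$ has nonempty interior, and for which distinct columns are never identically tied, is dense.

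The "not identically tied" part is easy. It fails only if two columns of $u_F$ agree, which is a codimension-$|S_L|$ condition.

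Full-dimensionality of $P_j$ is where the work lies. I would argue it by contradiction. If $P_j$ is nonempty but lower-dimensional, then $P_j$ lies in some affine hyperplane, and that forces a linear dependency among the finitely many vectors $u_F(\cdot,j)-u_F(\cdot,k)$ restricted to the simplex. Such a dependency is a proper algebraic condition on $u_F$.

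A cleaner alternative avoids polyhedral arguments altogether: perturb $u_F$ by adding $\delta\, r$ with $r$ random. Then almost surely no point of $\Delta(S_L)$ admits more than $|S_L|$ affinely independent equalities among the tie and boundary constraints. Consequently every nonempty $P_j$ whose relative interior in $\Delta(S_L)$ is empty would have to be cut out by equalities that are degenerate, which happens with probability zero. I would present this perturbation argument; density (and in fact almost-sure validity) of $\cU$ then follows. Only $u_F$ needs to be generic: $u_L$ played no role above, so $\cU$ can be taken of the form $\bR^{S_L\times S_F}\times \cU_F$.
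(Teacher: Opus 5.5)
The paper gives no proof of this statement; it is quoted from von Stengel--Zamir, so your proposal has to stand on its own. Your reduction is correct and is the natural one:
\begin{itemize}
\item $\cL\le\cH$ is trivial, and the SSE reply can be taken to be a pure strategy $j^*$.
\item Once some $\tau$ has $BR_{u_F}(\tau)=\{j^*\}$, the points $\sigma_\varepsilon=(1-\varepsilon)\sigma_L^*+\varepsilon\tau$ have $j^*$ as unique best reply, which gives $\cL\ge\cH$.
\end{itemize}

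The gap is the step you flagged yourself. It is the only place where genericity enters, and neither of your sketches closes it.

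\textbf{The perturbation argument you say you would present is vacuous.} Every point of $\Delta(S_L)$ lies in an affine space of dimension $|S_L|-1$. So no point can ever satisfy more than $|S_L|-1$ independent affine equalities, for \emph{any} $u_F$. The same objection applies to the general-position clause in your definition of $\cU$. What has to be controlled generically is the number of tied pure best replies, not the rank of the active equations. Note that the tie hyperplanes are never in general position: ties between $j,k$ and between $k,l$ force a tie between $j,l$.

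\textbf{Your first sketch misses degeneracies on the boundary of the simplex.} Take $u_F=\begin{bmatrix}1&1\\0&1\end{bmatrix}$. Then $P_1=\{e_1\}$ is lower-dimensional. Yet $u_F(\cdot,1)-u_F(\cdot,2)=(0,-1)$ is a nonconstant affine function on the simplex, so there is no linear dependency to point to. The culprit is the pure-strategy tie $u_F(1,1)=u_F(1,2)$ on a face. This degeneracy genuinely matters: with $u_L=\begin{bmatrix}10&0\\0&0\end{bmatrix}$ one gets $\cH=10>0=\cL$. The bad conditions are also determinantal, not linear, in $u_F$.

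\textbf{A clean way to close the gap.} This also lets you drop both full-dimensionality and the distinct-columns condition. For $|S_F|\ge 2$ (the case $|S_F|=1$ is trivial) and $j\in S_F$, set
\[
t_j(u_F)=\max_{\sigma\in\Delta(S_L)}\ \min_{k\neq j}\bigl(u_F(\sigma,j)-u_F(\sigma,k)\bigr).
\]
\begin{itemize}
\item $P_j\neq\emptyset$ iff $t_j\ge 0$, and $BR_{u_F}(\tau)=\{j\}$ for some $\tau$ iff $t_j>0$.
\item $u_F\mapsto t_j(u_F)$ is continuous, being a maximum over a compact set of a jointly continuous function.
\item Adding a constant $c$ to every entry of column $j$ replaces $t_j$ by $t_j+c$, because $\sum_i\sigma_i=1$.
\end{itemize}
Hence $Z_j=\{u_F: t_j(u_F)=0\}$ is closed and meets each line in that direction in exactly one point. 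By Fubini it is Lebesgue-null, and therefore nowhere dense.

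Now take $\cU=\bR^{S_L\times S_F}\times\bigl(\bR^{S_L\times S_F}\setminus\bigcup_j Z_j\bigr)$, which is open, dense and of full measure. On $\cU$, the fact that $j^*$ is a best reply to $\sigma_L^*$ gives $t_{j^*}\ge 0$, hence $t_{j^*}>0$. That is exactly the point $\tau$ your key step needs.

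Your polyhedral idea can also be made rigorous. Apply Cramer's rule at an optimal vertex of the linear program defining $t_j$: then $t_j=0$ becomes the vanishing of a nonzero determinant built from the column differences restricted to a face of the simplex.
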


Note that $\cU$ can actually be taken to be a set of full Lebesgue measure\footnote{Although we have not been able to find a reference that explicitly says this, it is implicit as \cite{StengelZamir2010} mention that their notion of \textit{generic} coincides with the notion of \textit{generic} in \cite{Stengel2002}, where it is mentioned that \textit{generic games} form a set of full measure.}. Thus for most choices of utilities, we have $\cH(u_F, u_L) = \cL(u_F, u_L)$ and thus one may say that \textit{``it does not hurt''} to assume that the leader breaks ties in favour of the leader. We stress that the correct way of interpreting is as follows: for \textit{most games} even if the follower does not break ties in favour of the leader, then for each $\epsilon>0$ the leader can choose a mixed strategy $\sigma_{\epsilon} \in \Delta(S_L)$ such that $u_L(\sigma_{\epsilon}, \sigma_F)$ is $\epsilon$-almost as good as the optimal solution that one would get under the optimistic assumptions that ties are broken in favour of the leader, that is, 
$$u_L(\sigma_{\epsilon}, \sigma_F) \geq \cH(u_L, u_F) - \epsilon \quad \text{for all } \sigma_F \in BR_{u_F}(\sigma_{\epsilon}).$$

\subsection{Stability properties of generic Stackelberg games}

Even though for most games, breaking ties in favour of leader makes no difference to the \textit{global} maximum/supremum for the leader payoff in the sense that $\cH = \cL$, it might still make a difference as to what strategy the leader should choose. That is, it may well be the case that there is a difference at the Strong Stackelberg Equilibrium. This motivates the following definition.

\begin{mydef}\label{def: unstable} We say that a Strong Stackelberg equilibrium $(\sigma^*_L, \sigma^*_F)$ is \textbf{unstable} if there exists $\sigma'_F \in BR_{u_F}(\sigma^*_L)$ such that 
$$u_L(\sigma^*_L, \sigma'_F) <u_L(\sigma^*_L, \sigma^*_F) .$$ We say that it is $\textbf{stable}$ if it is not unstable. Likewise, we say that an optimal Stackelberg leader strategy $\sigma_L^*$ is \textbf{unstable} (resp. \textbf{stable}) if it appears as part of an \textbf{unstable} (resp. \textbf{stable}) Strong Stackelberg equilibrium. We call 
\begin{equation}\label{eqn: gap defn} \Delta(\sigma_L^*; S_L, S_F) = \max_{\sigma_F' \in BR_{u_F}(\sigma^*_L)} u_L(\sigma^*_L, \sigma^*_F) - u_L(\sigma^*_L, \sigma'_F) \end{equation} 
the \textbf{gap} of this Strong Stackelberg Equilibrium.

\end{mydef}

In other words, a Strong Stackelberg Equilibrium is unstable if the follower has another best reply which gives a worse payoff to the leader. Note that in zero-sum games (the case where $u_L = -u_F$) every Strong Stackelberg Equilibrium is stable.

Our main new result states that for a generic game, the Strong Stackelberg Equilibrium is either impure and unstable or it is pure. For us, generic means Lebesgue almost surely (which is stronger than \text{on a dense set}). The space of utility functions $S_L \times S_F \to \bR$ naturally has a Lebesgue measure as it can be identified with the Euclidean space $\bR^{|S_L||S_F|}$, which can naturally be identified with a space of matrices. Likewise, the reader can replace, without loss of generality, the Lebesgue measure with the uniform probability distribution on $[0,1]^{|S_L||S_F|}$ (so our theorem is about randomly chosen utilities $u_L$ for the leader). 

\begin{thm}\label{thm: main thm random game} Fix a utility $u_F:S_L \times S_F \to \bR$ for the follower. Then for Lebesgue almost all utilities of the leader $u_L:S_L \times S_F \to \bR$ the optimal Stackelberg leader strategy is unique and one of the following holds:

\begin{enumerate}
	\item The optimal Stackelberg leader strategy is pure.
	\item The optimal Stackelberg leader strategy is impure and unstable.
\end{enumerate}

\end{thm}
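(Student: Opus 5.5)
The plan is to exploit that, once $u_F$ is fixed, the geometry of the problem (which leader strategies induce which follower best responses) is frozen, and only the linear objective of the leader is random. For each pure follower strategy $j \in S_F$ I will consider the best-response region
$$P_j = \{\sigma_L \in \Delta(S_L) ~|~ u_F(\sigma_L, j) \geq u_F(\sigma_L, j') \text{ for all } j' \in S_F\}.$$
This is a polytope, possibly empty, cut out of the simplex by finitely many linear inequalities depending only on $u_F$. It therefore has a finite vertex set $V_j$ that does not depend on $u_L$.

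\textbf{Step 1: reformulate the SSE value.} Since $u_L(\sigma_L,\sigma_F)$ is linear in $\sigma_F$ and $BR_{u_F}(\sigma_L)$ is the simplex spanned by the pure best replies, the inner maximum in \eqref{eqn: SSE} is attained at a pure $j$. Hence
$$\cH(u_L,u_F) = \max_{j \in S_F} \max_{\sigma_L \in P_j} u_L(\sigma_L, j),$$
and each inner problem is a linear program over $P_j$ with cost vector $(u_L(s,j))_{s \in S_L}$. The optimal Stackelberg leader strategies are exactly the maximizers $\sigma_L$ appearing in some optimal pair $(\sigma_L, j)$.

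\textbf{Step 2: uniqueness and vertex property.} For each of the finitely many pairs $(v,j) \neq (v',j')$ with $v \in V_j$ and $v' \in V_{j'}$, the equation $u_L(v,j) = u_L(v',j')$ is a linear equation in the coordinates of $u_L$. It is nontrivial. If $j \neq j'$, it involves disjoint sets of coordinates with nonzero coefficients $v(s)$ and $v'(s)$. If $j = j'$, then $v \neq v'$. Each such equation cuts out a hyperplane, so discarding the finite union of these hyperplanes (a null set) leaves a full-measure set of $u_L$. On this set all vertex values are distinct.

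A linear function on a polytope attains its maximum on a face, and that face contains a vertex. If the maximizing face were not a single vertex, it would contain two vertices with equal value. The same argument applies across the different $j$. Therefore the global maximum is attained at a unique pair $(\sigma^*, j^*)$ with $\sigma^* \in V_{j^*}$, and no other point of any $P_j$ attains it. This gives uniqueness of the optimal Stackelberg leader strategy.

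\textbf{Step 3: the dichotomy.} Suppose $\sigma^*$ is impure; I will show it is unstable.

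First, $BR_{u_F}(\sigma^*)$ must contain some pure $j' \neq j^*$. Otherwise every inequality $u_F(\sigma^*, j^*) \geq u_F(\sigma^*, j')$ is strict, so near $\sigma^*$ the polytope $P_{j^*}$ coincides with the simplex. Then $\sigma^*$, being a vertex of $P_{j^*}$, would be a vertex of $\Delta(S_L)$, i.e.\ pure, a contradiction.

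Second, stability would force $u_L(\sigma^*, j') = u_L(\sigma^*, j^*)$. This is the equation $\sum_s \sigma^*(s)\,(u_L(s,j') - u_L(s,j^*)) = 0$, which is a nontrivial linear condition on $u_L$ because $\sigma^* \neq 0$. The vertex $\sigma^*$ ranges over the finite set $\bigcup_j V_j$, so I will also discard these finitely many hyperplanes. The exceptional set remains null.

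For $u_L$ outside it, $u_L(\sigma^*, j') < u_L(\sigma^*, j^*)$, since $j^*$ is optimal and equality is excluded. So the equilibrium is unstable. In fact, by Step 2 the strategy $\sigma^*$ is unique, and under this argument any SSE built on it is unstable.

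\textbf{Main obstacle.} The delicate point is Step 2: checking that every excluded equation is genuinely nontrivial, and handling the bookkeeping so that the exceptional set is a finite union of hyperplanes. This includes the cases of empty or lower-dimensional $P_j$ and of mixed $\sigma_F$. The key observation making this work is that all vertex sets are determined by $u_F$ alone.
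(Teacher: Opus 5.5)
Your proof is correct and follows essentially the same route as the paper. Your $P_j$ are the paper's $X(j)$, with a finite vertex set independent of $u_L$; uniqueness comes from excluding finitely many hyperplanes on which two vertex values coincide; an impure optimum is shown to admit a second pure best reply $j'$; and instability follows by excluding the hyperplanes $u_L(\sigma^*,j')=u_L(\sigma^*,j^*)$. Two small differences are worth noting: your local argument for the second best reply (a unique best reply would make $P_{j^*}$ agree with the simplex near $\sigma^*$, forcing the vertex $\sigma^*$ to be pure) is a cleaner substitute for the sequence argument in Proposition~\ref{prop: geo prop}, and because your Step 2 already excludes equal values at a common vertex for different $j$, it gives a unique optimal pair $(\sigma^*,j^*)$, which makes the extra exclusion in Step 3 redundant.
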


In other words, if the Stackelberg Game has a Strong Stackelberg Equilibrium that is not pure, then almost surely this equilibrium is unstable. On the other hand, if the Strong Stackelberg Equilibrium is pure, then we cannot say that it is almost surely stable or almost surely unstable (however, Theorem~\ref{thm: u_L and u_F randomized} below shows that we can say it is almost surely stable if we also sample $u_F$ randomly). Indeed, one can construct examples of $u_F$ where each case can occur on a set of positive measure.

Note that it immediately follows as a corollary that the Lebesgue measure can be replaced with the uniform probability distribution on $[0,1]^{|S_L||S_F|}$ or indeed any continuous distribution such as a Normal distribution. Thus an intuitive interpretation of this theorem says that if we randomly chosen utilities $u_L$ for the leader, with fixed $u_F$, then with probability $1$ it holds that either the (unique) Strong Stackelberg Equilibrium is pure or the Strong Stackelberg Equilibrium is non pure and unstable. We can also choose both $u_F$ and $u_L$ independently and randomly and the conclusion of the theorem is still true, as can be seen by conditioning on $u_F$. Thus our theorem is stronger when it is stated with $u_F$ fixed. Another motivation for fixing $u_F$ is that we are also interested in Satisfaction games \cite{Satisfaction2024} where $u_F$ is $\{0,1\}$-valued (on pure strategies) and thus is not randomly sampled from a continuous distribution.

We now turn to the question of what happens if we allow $u_F$ to also be randomized. We show that we can in fact strengthen the conclusion in Theorem~\ref{thm: main thm random game} as follows.

\begin{thm}\label{thm: u_L and u_F randomized} For Lebesgue almost all $(u_L,u_F) \in \bR^{S_L \times S_F} \times \bR^{S_L \times S_F}$ pairs of utilities for the leader and follower, the optimal Stackelberg leader strategy is unique and one of the following holds:

\begin{enumerate}
	\item The optimal Stackelberg leader strategy is pure and stable.
	\item The optimal Stackelberg leader strategy is non-pure and unstable.
\end{enumerate}

\end{thm}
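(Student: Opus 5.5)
Since the set of pairs $(u_L,u_F)$ is $\bR^{S_L\times S_F}\times\bR^{S_L\times S_F}$ with product Lebesgue measure, I will reduce to Theorem~\ref{thm: main thm random game} by Fubini. For every fixed $u_F$, Theorem~\ref{thm: main thm random game} says that for almost all $u_L$ the optimal Stackelberg leader strategy is unique and is either pure, or impure and unstable. Integrating over $u_F$, this holds for almost all pairs $(u_L,u_F)$. It therefore remains only to show that, off a null set of pairs, whenever the unique optimal commitment is a pure strategy $s_L$ it is stable.

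Let $s_L$ be pure. Then $BR_{u_F}(s_L)$ is the set of mixed strategies supported on
$$\arg\max_{s_F} u_F(s_L,s_F).$$
If this argmax is a single pure strategy, then $BR_{u_F}(s_L)$ is a singleton and the SSE at $s_L$ is trivially stable. Consider the exceptional set
$$E=\{u_F : \exists\, s_L\in S_L,\ \exists\, s_F\neq s_F' \text{ with } u_F(s_L,s_F)=u_F(s_L,s_F')\}.$$
It is a finite union of hyperplanes in $\bR^{S_L\times S_F}$, hence Lebesgue null. So $E\times\bR^{S_L\times S_F}$ is null in the product space. For $u_F\notin E$, every pure leader strategy has a unique follower best response. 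Any pure optimal commitment is then stable: any SSE $(s_L,\sigma_F^*)$ must have $\sigma_F^*$ equal to that unique response, and there is no alternative $\sigma'_F$.

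Combining the two steps: outside the union of the Fubini-null set and $E\times\bR^{S_L\times S_F}$, the optimal commitment is unique and either pure (hence stable) or impure and unstable. The one point needing care is measurability of the "good" set so that Fubini applies. I would handle this by noting that the exceptional sets in the proof of Theorem~\ref{thm: main thm random game} are presumably semialgebraic, or finite unions of lower-dimensional pieces depending measurably on $u_F$. Alternatively, I would show directly that the bad set is contained in a measurable null set and apply Tonelli to its indicator. This measurability check is the only real obstacle; the genericity argument itself is immediate.
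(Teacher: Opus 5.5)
Your proposal is correct and matches the paper's argument: the paper obtains this theorem by combining Theorem~\ref{thm: main thm random game}, lifted to random pairs by conditioning on $u_F$, with Proposition~\ref{prop: BR unique}, and that proposition is proved exactly by your observation that a tie $B_{k,i}=B_{k,j}$ is a Lebesgue-null event. The measurability needed for Fubini/Tonelli is something the paper passes over silently, and your suggested fix handles it: the bad set is first-order definable over $\bR$, hence semialgebraic by Tarski--Seidenberg and therefore Borel.
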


Note that the difference in the conclusion of Theorem~\ref{thm: main thm random game} and Theorem~\ref{thm: u_L and u_F randomized} is that now if the optimal Stackelberg leader strategy is pure then almost-surely it is stable (in Theorem~\ref{thm: main thm random game}, it could be almost-surely unstable for particular $u_F$).

The practical implication of this is that, for almost all games, if the optimal Stackelberg leader strategy is pure then it is stable, while if it is non-pure then it is unstable and the leader should consider other mixed strategies to commit to. Observe that by Theorem~\ref{thm: StengelZamir H=L},  for almost all games the leader could adjust their strategy to get a payoff close to $\cH(u_L,u_F)$ even in the pessimistic case where ties are broken against the leader. However if via this adjustment process the leader gets too close to an unstable Stackelberg optimal leader strategy, then a follower who is willing to or tends to (perhaps inadvertently due to imperfect decision making)  make small sacrifices to their best response could harm the leader by choosing a response that results in a payoff much worse than $\cH(u_L, u_F)$. Thus naturally the leader should consider \textit{robustness} techniques, such as those studied in \cite{GanHanWuXu2025RSE} or \cite{RobustSolutions2010}. In particular, in \cite{GanHanWuXu2025RSE} a notion of a \textit{Robust Stackelberg Equilibrium} is introduced where the leader is assumed to only choose a best response that is $\delta$-close to optimal and breaks ties amongst such that are \textit{against} the leader. This provides a natural limit to how much the leader can (in a generic game) choose strategies that are sufficiently close to an unstable Strong Stackelberg Equilibrium. These methods are not computable, i.e. they don't scale.

We note that Theorem~\ref{thm: u_L and u_F randomized} is an immediate consequence of Theorem~\ref{thm: main thm random game} and the following Proposition, which could be of independent interest.

\begin{prop}\label{prop: BR unique} Let $u_L$ be a fixed utility of the leader. Then for Lebesgue-almost all utilities $u_F$ of the follower the following holds: for any pure strategy $s_L \in S_L$ the Best Response set $BR_{u_F}(s_L)$ has only one element.

\end{prop}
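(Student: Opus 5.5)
The plan is to reduce the claim to a finite union of measure-zero sets. First fix a pure strategy $s_L \in S_L$. When the leader plays $s_L$, the follower's payoff from a mixed response is
$$u_F(s_L,\sigma_F)=\sum_{s_F\in S_F}\sigma_F(s_F)\,u_F(s_L,s_F),$$
which is linear in $\sigma_F$. So $BR_{u_F}(s_L)$ is exactly the set of probability measures supported on $\arg\max_{s_F} u_F(s_L,s_F)$. In particular, $BR_{u_F}(s_L)$ is a singleton if and only if this argmax is a single pure strategy. If two pure strategies $s_F\neq s_F'$ both attain the maximum, then every convex combination of them is also a best response, so the set is infinite.

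Second, record when uniqueness fails for some pure leader strategy. This happens exactly when there exist $s_L\in S_L$ and distinct $s_F,s_F'\in S_F$ with $u_F(s_L,s_F)=u_F(s_L,s_F')$. Let $N$ denote the set of such $u_F$. It satisfies
$$N\subseteq \bigcup_{s_L\in S_L}\ \bigcup_{s_F\neq s_F'} H_{s_L,s_F,s_F'},\qquad H_{s_L,s_F,s_F'}=\{u_F\in\bR^{S_L\times S_F} : u_F(s_L,s_F)-u_F(s_L,s_F')=0\}.$$
Each $H_{s_L,s_F,s_F'}$ is the kernel of a nonzero linear functional on $\bR^{S_L\times S_F}$. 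It is therefore a proper hyperplane and has Lebesgue measure zero. The union is finite, so $N$ is null, and the complement of $N$ is the full-measure set required.

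Third, note what the argument does not use. It never involves $u_L$, so the conclusion holds for every fixed $u_L$, and in fact uniformly in $u_L$. This independence is what allows the conditioning argument that combines this result with Theorem~\ref{thm: main thm random game} to give Theorem~\ref{thm: u_L and u_F randomized}: the exceptional set for $(u_L,u_F)$ is the product $\bR^{S_L\times S_F}\times N$, which is null by Fubini.

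There is no real obstacle. The only point that needs care is the first step: checking that ties among pure best replies are the only way $BR_{u_F}(s_L)$ can have more than one element, which follows from the linearity in $\sigma_F$.
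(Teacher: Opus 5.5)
Your proposal is correct and follows the paper's argument. A tie between two distinct pure best replies at some $s_L$ forces $u_F(s_L,s_F)=u_F(s_L,s_F')$, which is a Lebesgue-null hyperplane condition, and there are only finitely many such conditions; your added step identifying $BR_{u_F}(s_L)$ with the mixed strategies supported on the pure argmax makes explicit a point the paper leaves implicit.
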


\subsection{Quantifying the gap in unstable strategies}

One may be interested in understanding whether the gap (as defined by equation (\ref{eqn: gap defn}) in Definition~\ref{def: unstable}) in an unstable strategy is negligible or large. Our next result is a quantitative sharpening of Theorem~\ref{thm: main thm random game} which says that if we sample from a large range of matrices then this gap is negligible only with small probability.

\begin{thm}\label{thm: quantitative bound} Fix a strategy $u_F$ of the follower and sample $u_L$ from the uniform distirbution on $[-W,W]^{S_L \times S_F}$, where $W>0$. Then for $\Delta>0$ the probability that the unique Stackelberg equilibrium is impure and has a gap of at most $\Delta$ is bounded above by $$O\left(\frac{\Delta}{W}\right) \quad \text{as } W \to \infty.$$ Here the implicit big-$O$ constant depends only on the cardinalities $|S_L|$ and $|S_F|$.
\end{thm}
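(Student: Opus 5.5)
The plan is a union bound over a finite, $u_L$-independent family of candidate commitments. For each candidate we bound the probability that the leader's payoff difference between two follower best replies falls in a window of length $\Delta$.

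\textbf{Step 1: a $u_L$-independent candidate set.} For $j \in S_F$ let $P_j = \{\sigma \in \Delta(S_L) : u_F(\sigma,j) \ge u_F(\sigma,k) \text{ for all } k\}$. These polytopes depend only on $u_F$, and
$$\cH(u_L,u_F) = \max_j \max_{\sigma \in P_j} u_L(\sigma, j).$$
Let $V$ be the set of all vertices of all the $P_j$. Each vertex is the unique solution of $|S_L|-1$ linearly independent tight constraints, chosen from the $|S_L|$ constraints $\sigma(s)=0$ and the at most $|S_F|^2$ constraints $u_F(\sigma,j)=u_F(\sigma,k)$. Hence $|V| \le \binom{|S_L|+|S_F|^2}{|S_L|-1}$, a bound depending only on the cardinalities.

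\textbf{Step 2: reduction to a vertex event.} By Theorem~\ref{thm: main thm random game}, off a null set the optimal leader strategy $\sigma^*$ is unique. Let $j^*$ be the optimistic reply. Then $\sigma^*$ is the unique maximizer of the linear function $u_L(\cdot,j^*)$ on $P_{j^*}$, so $\sigma^* \in V$. Suppose $\sigma^*$ is impure. Then, again off a null set by Theorem~\ref{thm: main thm random game}, it is unstable, so there is $j' \in BR_{u_F}(\sigma^*)$ with $u_L(\sigma^*,j') < u_L(\sigma^*,j^*)$. If moreover the gap is at most $\Delta$, then
$$0 < u_L(\sigma^*,j^*) - u_L(\sigma^*,j') \le \Delta .$$
So the event in the theorem is contained, up to a null set, in the union over $v \in V$ and ordered pairs $j \ne j'$ of the events
$$E_{v,j,j'} = \{\, 0 < u_L(v,j)-u_L(v,j') \le \Delta \,\}.$$

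\textbf{Step 3: a small-ball estimate for each event.} Fix $v, j, j'$. The quantity
$$X = \sum_s v(s)\bigl(u_L(s,j)-u_L(s,j')\bigr)$$
is linear in the independent uniform entries of $u_L$. Since $v$ is a probability vector, some $s_0$ has $v(s_0) \ge 1/|S_L|$. Condition on every entry except $u_L(s_0,j)$, which is uniform on $[-W,W]$ and enters $X$ with coefficient $v(s_0)$. Given the rest, $X$ lies in a fixed interval of length $\Delta$ only if $u_L(s_0,j)$ lies in an interval of length $\Delta/v(s_0) \le |S_L|\Delta$. That has conditional probability at most $|S_L|\Delta/(2W)$. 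Integrating out the conditioning and summing over the at most $|V|\,|S_F|^2$ triples gives the bound
$$|V|\,|S_F|^2\,|S_L|\,\frac{\Delta}{2W} = O\!\left(\frac{\Delta}{W}\right),$$
with a constant depending only on $|S_L|$ and $|S_F|$.

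\textbf{Main obstacle.} The delicate part is Step 2: justifying that the optimum lies in the finite set $V$ and that instability supplies a genuine $j' \ne j^*$. This includes degenerate $u_F$, for example follower columns that are identical or proportional, which make some indifference constraints vacuous. Relying on the almost-sure uniqueness and instability already given by Theorem~\ref{thm: main thm random game} avoids having to analyze these cases directly. The only further care needed is to count vertices through linearly independent tight constraints, so that $|V|$ is bounded independently of $u_F$.
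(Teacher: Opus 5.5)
Your proof is correct and is essentially the paper's argument. It is a union bound over the finite, $u_F$-determined vertex set $\mathcal{E}$ (your $V$) and pairs of follower replies, with each event controlled by conditioning on every entry of $u_L$ except one whose coefficient is at least $1/|S_L|$. The differences are cosmetic. You get the second best reply from the almost-sure instability in Theorem~\ref{thm: main thm random game} and use a one-sided window, while the paper uses Proposition~\ref{prop: geo prop} and the two-sided window $[-\Delta,\Delta]$. Your explicit counts of $|V|$ and of pairs $j\neq j'$ also fix the typos in the paper's final constant, where $\binom{N}{2}$ should be a count of follower pairs and $O(W)$ should be $O(\Delta/W)$.
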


We remark that this does not say that the gap is large with high probability. The correct interpretation is that, with high probability, either the optimal Stackelberg leader strategy is pure or the gap is large. 

\section{Applications to Satisfaction games}

In \cite{Satisfaction2024} the concept of \textit{satisfaction} games were considered and some interesting questions were posed, which we can now address using our result Theorem~\ref{thm: main thm random game}.

For a fixed $U_f^{-} \in \bR$ we can define a new utility, which we call the $U_f^{-}$-\textbf{satisfactionization} of $u_F$, of the follower $\widetilde{u_F}:S_L \times S_F \to \bR$ given by $$\widetilde{u_F}(s_L,s_F) = \mathds{1}\{ u_F(s_L, s_F) \geq U_f^{-}\}.$$ That is, $\widetilde{u_F}(s_L, s_F) = 1$ if $u_F(s_L, s_F) \geq U_f^{-}$ and $\widetilde{u_F}(s_L, s_F)  = 0$ otherwise. This gives rise to a new Stackelberg game called the \textbf{Satisfaction Game}.

In \cite{Satisfaction2024} it was conjectured that the highest leader payoff in the Satisfaction game is at least as good as the highest leader payoff in the original game. This can be formulated as follows.

\begin{conjecture}[cf. Proposition 1 in \cite{Satisfaction2024}] \label{conj} Let $\widetilde{u_F}$ by the satisfactionization as defined above. Then $$\cH(u_L, u_F) \leq \cH(u_L, \widetilde{u_F}).$$

\end{conjecture}

It turns out that this conjecture is false in general. In Section~\ref{sec: counterexample to conj}, we construct counterexamples to this conjecture. For the remainder of this section, we turn to providing some cases where this conjecture is actually true.

It was proven in Theorem 2 of \cite{Satisfaction2024} that Conjecture~\ref{conj} does indeed hold if we assume the leader can only choose pure strategies (i.e., in equations (\ref{eqn: SSE}) and (\ref{eqn: highest leader payoff}) the first $\max$ is over $S_L$ instead of $\Delta(S_L)$). The following is Lemma is an immediate consequence of this, but we provide a simple self-contained proof for convenience.

\begin{lemma}\label{lemma: conj true for pure} Conjecture~\ref{conj} is true when there is an optimal Stackelberg strategy for the original game $(u_L, u_F)$ that is pure.
\end{lemma}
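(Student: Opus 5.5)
Let $s_L^* \in S_L$ be a pure optimal Stackelberg leader strategy for $(u_L,u_F)$, and let $\sigma_F^* \in BR_{u_F}(s_L^*)$ be a follower response attaining the inner maximum, so that $\cH(u_L,u_F) = u_L(s_L^*,\sigma_F^*)$. The plan is to exhibit a single follower strategy that is a best response to $s_L^*$ in the Satisfaction game and gives the leader at least this much. Then
$$\cH(u_L,\widetilde{u_F}) \geq \max_{\sigma_F \in BR_{\widetilde{u_F}}(s_L^*)} u_L(s_L^*,\sigma_F) \geq \cH(u_L,u_F),$$
since the outer maximum in $\cH(u_L,\widetilde{u_F})$ ranges over all of $\Delta(S_L)$, which contains $s_L^*$.

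The first step reduces to pure follower responses. Because $s_L^*$ is pure, $u_F(s_L^*,\cdot)$ is a linear function on $\Delta(S_F)$. Hence $BR_{u_F}(s_L^*)$ is exactly the set of mixtures supported on $A := \arg\max_{s_F \in S_F} u_F(s_L^*,s_F)$. Since $u_L(s_L^*,\cdot)$ is also linear, we may take $\sigma_F^*$ to be a pure strategy $s_F^* \in A$ maximizing $u_L(s_L^*,s_F)$ over $A$. By the same linearity argument, $BR_{\widetilde{u_F}}(s_L^*)$ consists of the mixtures supported on $\widetilde A := \arg\max_{s_F} \widetilde{u_F}(s_L^*,s_F)$.

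The key step is to show $s_F^* \in \widetilde A$, that is, $\widetilde{u_F}(s_L^*,s_F^*) = \max_{s_F}\widetilde{u_F}(s_L^*,s_F)$. There are two cases.
\begin{itemize}
\item If $u_F(s_L^*,s_F^*) \geq U_f^-$, then $\widetilde{u_F}(s_L^*,s_F^*) = 1$, which is the maximum possible value.
\item Otherwise $u_F(s_L^*,s_F) \leq u_F(s_L^*,s_F^*) < U_f^-$ for every $s_F$, since $s_F^*$ maximizes $u_F(s_L^*,\cdot)$. Then $\widetilde{u_F}(s_L^*,\cdot) \equiv 0$, so every $s_F$ is a best response, in particular $s_F^*$.
\end{itemize}
In both cases the monotonicity of the threshold map $x \mapsto \mathds{1}\{x \geq U_f^-\}$ gives $s_F^* \in \widetilde A$.

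It follows that $\max_{\sigma_F \in BR_{\widetilde{u_F}}(s_L^*)} u_L(s_L^*,\sigma_F) \geq u_L(s_L^*,s_F^*) = \cH(u_L,u_F)$, which completes the argument. There is no real obstacle. The only point needing care is the reduction to a pure maximizer $s_F^*$, which relies on linearity at a pure leader strategy. This is also exactly where the argument breaks for mixed $\sigma_L^*$: there $\widetilde{u_F}(\sigma_L^*,\cdot)$ is not a monotone transform of $u_F(\sigma_L^*,\cdot)$.
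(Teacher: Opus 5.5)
Your proof is correct and follows essentially the paper's argument: both use monotonicity of the threshold map to show that a pure $u_F$-best response to $s_L^*$ remains a best response in the satisfaction game, then compare optimistic payoffs at $s_L^*$. The paper proves the full inclusion $BR_{u_F}(s_L^*)\subset BR_{\widetilde{u_F}}(s_L^*)$, whereas you only need it for $s_F^*$; one small correction to your closing remark is that linearity of $u_F(\sigma_L,\cdot)$ holds for every $\sigma_L$, so purity of $s_L^*$ is really used only in writing $\widetilde{u_F}(s_L^*,s_F)=\mathds{1}\{u_F(s_L^*,s_F)\geq U_f^-\}$, which you rightly identify as what fails for mixed commitments.
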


\begin{proof} Let $s_L^* \in S_L$ be such an optimal strategy. Thus $$u_L(s_L^*, s_F^*) = \cH(u_L, u_F)$$ for some $s_F^* \in S_F$ as the best response set always has a pure strategy. We now claim that $$BR_{u_F}(s_L^*) \subset BR_{\widetilde{u_F}}(s_L^*),$$ i.e., the best response set for the original game is a subset of the best response set of the satisfaction game. To see this, let $s_F \in BR_{u_F}(s_L^*)$ be any pure strategy in the best response set or the original game.  Thus for any $s_F' \in S_F$ we have 
$$u_F(s_L^*, s_F') \leq u_F(s_L^*, s_F)$$ 
and hence 
$$\widetilde{u_F}(s_L^*, s_F') \leq \widetilde{u_F}(s_L^*, s_F)$$ 

since the function $\mathds{1}_{[U_F^{-1}, \infty)}$ is weakly-increasing. This precisely means that $s_F \in BR_{\widetilde{u_F}}(s_L^*)$. This inclusion of BR sets now implies that 
\begin{align*} \cH(u_L, \widetilde{u_F}) &\geq \max_{s_F \in BR_{\widetilde{u_F}}} u_L(s_L^*, s_F) \\
& \geq \max_{s_F \in BR_{u_F}} u_L(s_L^*, s_F) \\
&= \cH(u_L, u_F)
\end{align*}

which completes the proof.
\end{proof}

Now we can combine Lemma~\ref{lemma: conj true for pure}  with Theorem~\ref{thm: main thm random game} to show that Conjecture~\ref{conj} is true for generic games where the optimal Stackelberg strategy for the original game is stable. This can be precisely formulated as follows.

\begin{corol} Let $u_F$ be a fixed utility of the follower and let $\widetilde{u}_F$ be a satisfactionization of $u_F$ (utility in a satisfaction game derived from $u_F$). Then for Lebesgue almost all $u_L$ we have that the following holds: If the strong Stackelberg Equilibrium is stable for the original game (with utilities $u_F$, $u_L$) then $$\cH(u_L, u_F) \leq \cH(u_L, \widetilde{u_F}).$$ That is, Conjecture~\ref{conj} is almost surely true for games with stable strong Stackelberg equilibrium.

\end{corol}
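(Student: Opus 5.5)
The corollary is a direct combination of Theorem~\ref{thm: main thm random game} and Lemma~\ref{lemma: conj true for pure}, and I will argue it that way.

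First, fix $u_F$ and let $\cG \subset \bR^{S_L \times S_F}$ be the set of full Lebesgue measure supplied by Theorem~\ref{thm: main thm random game}. For every $u_L \in \cG$ the optimal Stackelberg leader strategy $\sigma_L^*$ for $(u_L,u_F)$ is unique, and it is either pure or impure and unstable. Note that the satisfactionization $\widetilde{u_F}$ plays no role in the choice of $\cG$. This matters because $\widetilde{u_F}$ is a deterministic function of $u_F$ and the fixed threshold $U_f^{-}$, and it does not depend on $u_L$. So the same full-measure set works.

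Next, take $u_L \in \cG$ and assume the Strong Stackelberg Equilibrium of the original game is stable. The logical point to check is the contrapositive of the dichotomy. Uniqueness of the optimal leader strategy means every SSE has $\sigma_L^*$ as its leader component. If $\sigma_L^*$ were impure, alternative (ii) would make it unstable. I must reconcile the definitions here: an optimal leader strategy is called unstable if it appears in some unstable SSE, and then stability of "the" SSE is interpreted accordingly. The cleanest route is that stability depends only on $\sigma_L^*$. Indeed, for a fixed $\sigma_L^*$, every SSE partner $\sigma_F^*$ attains the same value $\max_{\sigma_F \in BR_{u_F}(\sigma_L^*)} u_L(\sigma_L^*,\sigma_F)$, so instability is the statement that $\min_{BR} u_L < \max_{BR} u_L$ at $\sigma_L^*$, which is independent of the chosen $\sigma_F^*$. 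Hence a stable SSE forces alternative (i), so $\sigma_L^*$ is pure.

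Finally, apply Lemma~\ref{lemma: conj true for pure} to the pure optimal strategy $\sigma_L^*$ to get $\cH(u_L,u_F) \le \cH(u_L,\widetilde{u_F})$. This holds for all $u_L \in \cG$ satisfying the stability hypothesis, which proves the claim almost surely. The only step needing care is the definitional point above, namely that stability is a property of $\sigma_L^*$ alone. Everything else is bookkeeping.
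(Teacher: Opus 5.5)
Your proposal is correct and follows the paper's proof: Theorem~\ref{thm: main thm random game} gives, for almost every $u_L$, a unique optimal leader strategy, and stability forces it to be pure, after which Lemma~\ref{lemma: conj true for pure} yields $\cH(u_L,u_F)\le\cH(u_L,\widetilde{u_F})$. Your extra check fills in a definitional point the paper leaves implicit. It shows that stability depends only on $\sigma_L^*$, since every SSE partner attains $\max_{\sigma_F\in BR_{u_F}(\sigma_L^*)}u_L(\sigma_L^*,\sigma_F)$.
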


\begin{proof} By Theorem~\ref{thm: main thm random game}, the optimal Stackelberg leader strategy, being stable, is Lebesgue-almost surely pure. Thus Lemma~\ref{lemma: conj true for pure} applies. \end{proof}

\section{Some examples illustrating Theorem~\ref{thm: main thm random game}}

Before we delve into the proof of Theorem~\ref{thm: main thm random game} in its full generality, it is worthwhile to prove it for a particular simple example and discuss some of its features. 

\begin{eg} We consider the case where $|S_L| = |S_F|=2$, i.e., both the leader and follower have two pure strategies.

As the theorem considers $u_F$ fixed, let us fix the matrix 

$$u_F =  \begin{bmatrix}
    1 &  0 \\
    0 & 1 \\
\end{bmatrix}$$

where as usual the row player is the leader and the column player is the follower. The leader matrix is random (randomly sampled from the Lebesgue measure or any continuous distribution on the space of matrices) and thus we represent it as 

$$u_L = \begin{bmatrix}
    a &  b \\
    c & d \\
\end{bmatrix}$$

where $a,b,c,d \in \bR$ are random and independent continuous real random variables.

\end{eg}

We suppose that the leader chooses the first row with probability $p \in [0,1]$ and the second row with probability $1-p$. The follower then chooses the first column with probability $q \in [0,1]$ and the second column with probability $1-q$.

The utility of the follower, considered as a function of $q$ with $p$ fixed, is then
\begin{align*} u_F(q) = pq + (1-p)(1-q) 
&= (2p - 1) q +1 - p \end{align*}

Thus we obtain that 

\begin{itemize}
	\item If $p < 1/2$, then follower chooses $q = 0$ as a best reply.
	\item If $p > 1/2$, then follower chooses $q = 1$ as a best reply.
	\item If $p = 1/2$, then any $q \in [0,1]$ is a best reply for the follower.
\end{itemize}

The utility of the leader is $$u_L(p) = apq + bp(1-q) + c (1-p)q + d(1-p)(1-q).$$
Seeing the dependence of $p$ on $q$ established in the three cases above, we see that the leader payoff is 

$$u_L(p) = 
\begin{cases} 
(b-d)p + d & \text{if } 0 \leq p < \frac{1}{2} \\
(a-c)p + c & \text{if } \frac{1}{2} \leq p < 1 \\
\frac{1}{2} (aq + b(1-q) + cq + d(1-q)) & \text{if } p = \frac{1}{2}.
\end{cases}$$

In other words, the leader payoff is a piecewise linear function of $p$ and could be multi-valued at $p = 1/2$ (depends on $q$ there). A strong Stackelberg Equilibrium precisely corresponds to a $p$ which maximizes this function.  A maximum thus must occur at either $p=0,1$ (a pure strategy for the leader) or at $p=1/2$ (a non-pure strategy of the leader). If a maximum occurs at $p=1/2$, then this Strong Stackelberg Equilibrium is stable if and only if $$\frac{1}{2} (aq + b(1-q) + cq + d(1-q))$$ is constant as a function of $q$. This happens only if $a-b+c-d = 0$. But as $a,b,c,d$ are drawn from a continuous distribution, this happens with probability $0$, thus illustrating the conclusion of our theorem. We note that, in this example, if a Strong Stackelberg Equilibrium occurs at a pure strategy (at $p =0$ or $p=1$), then it is stable as the best reply is unique. This need not be the case, there are some choices of $u_F$ where some pure strategy happens to be unstable almost surely (i.e., for almost all choices of $u_L$).

\section{Proof of Theorem~\ref{thm: main thm random game} and Theorem~\ref{thm: quantitative bound}}
\label{sec: main proof}

We will use the notation of \cite{StengelZamir2010} which we now recall as follows. It will be convenient to represent the utilities as matrices and the pure strategies as sets of integers. Thus we let $S_L = \{1, \ldots, N\} $ and $S_F = \{1, \ldots, M\}$ be the pure strategies for the leader and follower respectively. Thus we have a fixed $N \times M$ matrix $B$ that represents the utility of the follower (i.e., $u_F(i,j) = B_{i,j}$). Likewise the leader utilities can be represented by an $N \times M$ matrix $A$, i.e., $A_{i,j} = u_L(i,j)$, however this matrix $A$ is not fixed but sampled from the Lebesgue measure.

We let $$X(j) = \{ \sigma_L \in \Delta(S_L) ~|~ j \in BR_{u_F}(\sigma_L) \}$$ be the set of all mixed strategies of the leader to which $j$ is a best reply. We note that $X(j)$ is a convex polytope and thus is a convex hull of a finite set of \textbf{extreme points}. We stress that $X(j)$ does not depend on $u_L$ (i.e., not on $A$), which we have not fixed (only fixed $u_F$). We recall that an \textbf{extreme point} of a convex set $X$ is a point $x$ such that whenever $x = tx_1 + (1-t)x_2$ for some $x_1,x_2 \in X$ and $0 < t <1$ then $x_1=x_2=x$ (i.e., $x$ is extreme in $X$ if it cannot be written as a convex combination of two other elements of $X$). 
We let $\cE_j$ denote the extreme points of $X(j)$ and we let 
$$\cE = \bigcup_{j=1}^M \cE_j$$
be the union of all of these extreme points. We note that $\cE$ is a finite set that depends only on $u_F$ and not on $u_L$ and thus is fixed (non-random).

\begin{lemma} \label{lemma: unique optimal} For Lebesgue-almost all $u_L$ it holds that an optimal Stackelberg leader strategy (w.r.t $u_L$, $u_F$) is unique.

\end{lemma}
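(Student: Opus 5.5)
The plan is to reduce the SSE optimization to a finite comparison over the fixed, $u_L$-independent set $\cE$, and then show that ties in that comparison occur only on a null set. First, since every best-response set contains a pure strategy, and $u_L(\sigma_L,\cdot)$ is linear on $BR_{u_F}(\sigma_L)$ (a face of the simplex spanned by pure best replies), we can write
$$\cH(u_L,u_F) = \max_{j \in S_F} \max_{\sigma_L \in X(j)} u_L(\sigma_L, j).$$
For fixed $j$, the map $\sigma_L \mapsto u_L(\sigma_L,j) = \sum_i \sigma_L(i) A_{i,j}$ is linear, and $X(j)$ is a polytope with extreme points $\cE_j$. So the inner maximum is attained on $\cE_j$, and the set of maximizers over $X(j)$ is a face of $X(j)$. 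Consequently, $\cH = \max_{j}\max_{x\in \cE_j} \langle x, A_{\cdot,j}\rangle$, where $A_{\cdot,j}$ denotes the $j$-th column of $A$.

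Next, take any optimal Stackelberg leader strategy $\sigma^*_L$, with a pure best reply $j$ attaining the inner maximum. Then $\sigma_L^*\in X(j)$ maximizes the linear functional $\langle\cdot, A_{\cdot,j}\rangle$ over $X(j)$. If $\sigma^*_L\notin\cE_j$, the maximizing face has dimension $\ge 1$ and contains at least two distinct extreme points $x\neq x'$ of $\cE_j$ with $\langle x-x', A_{\cdot,j}\rangle=0$. If two different optimal strategies exist, we get optimal pairs $(x,j)$ and $(x',j')$ with $x\in\cE_j$, $x'\in\cE_{j'}$, $(x,j)\neq(x',j')$, both attaining $\cH$. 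We may take these to be extreme points by the previous remark, while keeping $x\ne x'$ whenever the strategies are distinct. In either case, there exist $(x,j)\neq(x',j')$ with $x\in \cE_j$, $x'\in\cE_{j'}$, $x\neq x'$, and
$$\langle x, A_{\cdot,j}\rangle = \langle x', A_{\cdot,j'}\rangle.$$

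It remains to show that, for each of the finitely many such tuples, this equation defines a null set of $A$. It is a linear equation in the entries of $A$, so it suffices that it is nontrivial, i.e.\ that some coefficient is nonzero. If $j\neq j'$, the coefficient of $A_{i,j}$ is $x(i)$, and this is nonzero for some $i$ because $x$ is a probability vector. If $j=j'$, the coefficient vector is $x-x'\neq 0$. A nontrivial linear equation cuts out a hyperplane in $\bR^{N\times M}$, which has Lebesgue measure zero. A finite union of hyperplanes is still null, so outside it the optimal leader strategy is unique; in fact it is the unique maximizing extreme point.

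The main obstacle is the bookkeeping in the second step: one must carefully justify that any non-uniqueness, including a non-extreme optimizer or the same point optimal via two different follower replies, forces a tie between two distinct \emph{points} of $\cE$. The subtle case is $x=x'$ with $j\neq j'$. This does not contradict uniqueness of the strategy, so it should simply be excluded from the tie condition rather than handled.
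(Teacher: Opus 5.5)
Your proof is correct and follows essentially the same route as the paper. You reduce non-uniqueness to a tie $\langle x, A_{\cdot,j}\rangle = \langle x', A_{\cdot,j'}\rangle$ between distinct points $x \neq x'$ of the finite, $u_L$-independent set $\cE$, then show each such tie is a nontrivial linear (hence Lebesgue-null) condition on $A$, handling $j=j'$ and $j\neq j'$ separately. Your face-of-$X(j)$ argument also makes explicit the step the paper only asserts: that a non-extreme optimizer, or a second distinct optimizer, forces such a tie between distinct extreme points.
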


\begin{proof} An optimal leader strategy occurs at an extreme point of some $X(j)$ and thus at some point in $\cE$. There could be other optimal leader strategies but they will be convex combinations of such extreme points. As $\cE$ is finite, it will be enough to show that if $\sigma_L \neq \sigma_L'$ are distinct mixed leader strategies and $j, j' \in \{1, \ldots, M\}$ are pure follower strategies, then for Lebesgue-almost all $u_L$ we have that $u_L(\sigma_L, j) \neq u_L(\sigma_L', j')$.

We first consider the case where $j = j'$. In this case, $$u_L(\sigma, j) = \sum_{i=1}^N \sigma^i A_{i,j} $$ for any $\sigma = (\sigma^1, \dots, \sigma^N) \in \Delta(S_L)$. Thus $$u_L(\sigma_L, j) - u_L(\sigma_L', j) = \sum_{i=1}^N c_i A_{i,j}  $$ where $c_i$ is the $i$-th component of $\sigma_L - \sigma'_L$. Now the set of all matrices $A$ such that this expression is zero has Lebesgue measure zero, since at least one $c_i$ is non-zero. This is because the $c_i$ is here considered fixed and $A$ is randomly sampled, thus $(A_{1,j}, \ldots, A_{N,j}) \in \bR^N$ is sampled from the Lebesgue measure.

Now we must consider the case where $j \neq j'$. Say $j' = k \neq j$ for convenience of notation.

A similar calculation as above gives $$u_L(\sigma_L, j) - u_L(\sigma_L', k) = \sum_{i=1}^N c_i' A_{i,j} - \sum_{i=1}^N c_i'' A_{i,k}  $$ where $c'_i$ and $c_i''$ are the $i$-th components of $u_L(\sigma_L, j)$ and $u_L(\sigma_L', k)$ respectively.

Note that in this sum, at least one $c_i'$ and $c_i''$ are non-zero and the $$(A_{1,j}, \ldots, A_{N,j}, A_{1,k}, \ldots, A_{N,k}) \in \bR^{2N}$$ is sampled from the Lebesgue measure on $\bR^{2N}$. Thus this this expression is zero only on a set of zero Lebesgue measure. This completes the proof. \end{proof}

We now give the main geometric tool for proving Theorem~\ref{thm: main thm random game}.

\begin{prop}\label{prop: geo prop} Suppose that $u_L$ is chosen such that there is a unique strong Stackelberg Equilibrium $(\sigma_L^*, \sigma_F^*)$. Suppose also that that $\sigma_L^*$ is non-pure. Then $\sigma_L^*$ does not have a unique best reply, i.e., there exists $j \neq j'$ such that $$\sigma_L^* \in X(j) \cap X(j').$$

\end{prop}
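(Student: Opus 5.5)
Suppose, for contradiction, that $\sigma_L^*$ has a unique pure best reply $j$. Then $\sigma_L^*$ lies in $X(j)$ and in no other $X(j')$. The plan is to show that $\sigma_L^*$ must then lie in the relative interior of a region where it can be moved without changing the follower's behaviour. Moving it in either direction along a suitable segment then contradicts uniqueness of the optimal leader strategy.

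The first step is to observe that the follower's best responses at $\sigma_L^*$ are strict. For every $j' \neq j$ we have $u_F(\sigma_L^*, j) > u_F(\sigma_L^*, j')$. These are finitely many strict inequalities, and $u_F(\cdot, j) - u_F(\cdot, j')$ is linear (hence continuous) in $\sigma_L$. So there is a neighbourhood $U$ of $\sigma_L^*$ in $\Delta(S_L)$ on which $j$ remains the unique best reply. In particular $BR_{u_F}(\sigma) = \{j\}$ for all $\sigma \in U$, and the highest leader payoff at such $\sigma$ is exactly $u_L(\sigma, j) = \sum_i \sigma^i A_{i,j}$, a linear function of $\sigma$.

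The second step uses that $\sigma_L^*$ is non-pure: it has at least two indices $i_1 \neq i_2$ in its support. Consider the direction $v = e_{i_1} - e_{i_2}$. For all sufficiently small $|t|$, the point $\sigma_L^* + t v$ still lies in $\Delta(S_L)$, since the coordinates $i_1, i_2$ are strictly positive, and it also lies in $U$. Along this segment the leader's optimistic payoff is the affine function $t \mapsto u_L(\sigma_L^*, j) + t(A_{i_1,j} - A_{i_2,j})$.

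The third step draws the contradiction. Since $\sigma_L^*$ (at $t=0$) maximises the leader payoff, which is $\cH(u_L,u_F)$ attained with follower response $j$, an affine function with an interior maximum on an open interval must be constant. So $A_{i_1,j} = A_{i_2,j}$, and every $\sigma_L^* + tv$ with small $t \neq 0$ is also an optimal Stackelberg leader strategy. This contradicts the assumed uniqueness of the strong Stackelberg equilibrium, in particular uniqueness of $\sigma_L^*$. Hence $\sigma_L^*$ has at least two pure best replies $j \neq j'$, i.e.\ $\sigma_L^* \in X(j)\cap X(j')$.

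The main point needing care is the first step: one must argue with pure best replies, noting that the mixed best-response set is the simplex over the pure best replies, so a unique pure best reply means $BR_{u_F}(\sigma)$ is the singleton $\{j\}$. One must also ensure that the perturbation stays inside $\Delta(S_L)$, which is exactly where non-purity is used. Beyond this, the argument is routine.
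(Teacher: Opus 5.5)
Your proof is correct, but it reaches the conclusion by a somewhat different route from the paper.

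The paper argues directly. It asserts that, by uniqueness, $\sigma_L^*$ is an extreme point of every $X(j)$ containing it, and writes the non-pure $\sigma_L^*$ as $pu+(1-p)v$ with $u\neq v$. It then shrinks $u,v$ towards $\sigma_L^*$ to get points $u_n,v_n$ that can never lie in a common $X(j)$. Finally, finiteness of $S_F$ (a subsequence) and closedness of the $X(j)$ give two distinct best replies at $\sigma_L^*$ in the limit.

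You instead argue by contradiction, using the complementary fact that the region where $j$ is the \emph{strict} best reply is relatively open in $\Delta(S_L)$. So near $\sigma_L^*$ the leader's optimistic payoff is the single linear function $u_L(\cdot,j)$, and perturbing along $e_{i_1}-e_{i_2}$ within the support produces a second optimal commitment.

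Your version needs no polytope geometry and uses uniqueness in one transparent step. In particular, it sidesteps the paper's claim that $\sigma_L^*$ is extreme in \emph{every} $X(j)$ containing it. Uniqueness delivers that claim directly only when $j$ is a leader-favourable reply $j_0$; for the other $j$ one needs the extra observation that $X(j)\cap X(j_0)$ is a face of $X(j)$.

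The paper's version, in exchange, keeps the extreme-point picture in the foreground: $\sigma_L^*$ is a vertex of some $X(j_0)$, hence lies in the finite set $\cE$. That is what the surrounding measure-zero and union-bound arguments for Theorem~\ref{thm: main thm random game} rely on.
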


\begin{proof} The function $\sigma_L \mapsto u_L(\sigma_L, \sigma_F)$ is a linear function on each $X(j)$, and thus have maxima occurring on its extreme points. Thus $\sigma_L^*$ is an extreme point of some $X(j_0)$ and is in fact an extreme point of any $X(j)$ that contains $\sigma_L^*$. Note that this is where we use uniqueness, as otherwise $\sigma_L^*$ could be a convex combination of such. As $\sigma_L^*$ is not pure, there exists two \textbf{distinct} $u, v \in \Delta(S_L)$ such that $$\sigma_L^* = pu + (1-p)v \quad\text{for some } p \in (0,1).$$ 
Now let 
$$u_n = \left(1 - \frac{1}{n}\right) \sigma_L^* + \frac{1}{n} u$$ 
and let 
$$v_n = \left(1 - \frac{1}{n}\right) \sigma_L^* + \frac{1}{n} v.$$

Notice that since $u \neq v$ we have $u_n \neq v_n$ and $$\sigma_L^* = pu_n + (1-p)v_n.$$ Thus since $\sigma_L^*$ is an extreme point inside any $X(j)$, it must follow that $u_n \in X(j_n)$ and $v_n \in X(j_n ')$ with $j_n \neq j_n'$. Since there are only finitely many choices for $j$, we may pass to a subsequence $n_1 < n_2 < \ldots$ of positive integers such that $j = j_{n_i}$ and $j = j'_{n_i}$ are independent of $n_i$. Thus $j \neq j'$ and $u_n \in X(j), v_n \in X(j')$. But since $X(j)$ and $X(j')$ are closed and $u_n, v_n \to \sigma_L^*$ as $n \to \infty$, we must have that $\sigma_L^* \in X(j) \cap X(j')$. \end{proof}

We are now in a position to prove Theorem~\ref{thm: main thm random game}. First, we have that for Lebesgue almost all $u_L$ that the optimal Stackelberg leader strategy $\sigma_L^* \in \Delta(S_L)$ is unique. If it is pure, then we are done. If it is not pure, then we know by Proposition~\ref{prop: geo prop} that there exists $j \neq k$ such that $\sigma_L^* \in X(j) \cap X(k)$. We now show that $\sigma_L^*$ is almost surely unstable. This will follow similar techniques to the proof of Lemma~\ref{lemma: unique optimal}. By writing $\sigma_L^* = (c_1, \ldots, c_N) \in \Delta(S_L)$, we have that
\begin{equation} \label{eqn: difference of utilities} u_L(\sigma_L^*, j) - u_L(\sigma_L^*,k) =  \sum_{i=1}^N c_i A_{i,j} - \sum_{i=1}^N c_i A_{i,k}  \end{equation} 
with at least one $c_i$ non-zero. But the vector $$(A_{1,j}, \ldots, A_{N,j}, A_{1,k}, \ldots, A_{N,k}) \in \bR^{2N}$$ is sampled from the Lebesgue measure on $\bR^{2N}$. Thus for Lebesgue almost all $u_L$ (i.e., Lebesgue almost all $A$), this expression is non-zero if we consider $\sigma_L^*$ fixed and $j$ and $k$ fixed. The proof is now complete by noting that there are only finitely many choices for the triple $\sigma_L^*, j,k$ as $\sigma_L^*$ must be in $\cE$, which is finite. This completes the proof of Theorem~\ref{thm: main thm random game}.

\begin{proof}[Proof of Theorem~\ref{thm: quantitative bound}] The proof of this theorem is the same as the proof of Theorem~\ref{thm: main thm random game} above but with a different ending. Here we sample $u_L$ from $[-W,W]^{S_L \times S_F}$. We are now interested when the quantity from Equation (\ref{eqn: difference of utilities}) given by 
$$u_L(\sigma_L^*, j) - u_L(\sigma_L^*,k) =  \sum_{i=1}^N c_i A_{i,j} - \sum_{i=1}^N c_i A_{i,k} $$
has absolute value at most $\Delta$. Here note that $$(A_{1,j}, \ldots, A_{N,j}, A_{1,k}, \ldots, A_{N,k}) \in [-W,W]^{2N}$$ is sampled from the uniform distribution on $[-W,W]^{2N}$. Now for some $n \in \{1, \ldots, N\}$ we must have $c_n\geq 1/N$. Observe that if we fix all the variables except for the single variable $A_{n, j}$, then the probability that $A_{n,j} \in [-W,W]$ is chosen so that 
$$\sum_{i=1}^N c_i A_{i,j} - \sum_{i=1}^N c_i A_{i,k} \in [-\Delta, \Delta]$$ 
is at most $$ \frac{2\Delta c_n^{-1}}{2W} \leq \frac{N\Delta}{W}.$$

Thus by conditioning on these variables (i.e., all $2N-1$ variables except for $A_{n,j}$) we see that, with $(c_1, \ldots, c_N)$ fixed and $$(A_{1,j}, \ldots, A_{N,j}, A_{1,k}, \ldots, A_{N,k}) \in [-W,W]^{2N}$$ sampled uniformly,  the probability that
\begin{equation}\label{eqn: inside Delta interval} \sum_{i=1}^N c_i A_{i,j} - \sum_{i=1}^N c_i A_{i,k} \in [-\Delta, \Delta] \end{equation}
is at most $N\Delta/W$. Now there are only finitely many choices for $\sigma_L^* = (c_1, \ldots, c_N)$, as we recall that these are points inside the finite set $\cE$. We remark that this point $\sigma_L^*$ depends on the matrix $A$, however for the purposes of giving an upper bound on a probability we can consider it fixed and use the union bound as follows. By taking the union over all such $\sigma_L^* = (c_1, \ldots, c_N) \in \cE$ and all $j<k$ we see that (\ref{eqn: inside Delta interval}) holds for all such choices of $\sigma_L^*$, $j$,$k$ with probability 
$$\binom{N}{2}|\cE| \frac{N\Delta}{W} = O(W), $$ which completes the proof as $|\cE|$ is finite and can in fact be bounded by an expression involving only $N$ and $M$. 
\end{proof}

\section{Proof of Proposition~\ref{prop: BR unique} and Theorem~\ref{thm: u_L and u_F randomized}}

We now turn to proving Proposition~\ref{prop: BR unique}. We use the same notation as in Section~\ref{sec: main proof} regarding matrices and our indexing  $S_L = \{1, \ldots, N\} $ and $S_F = \{1, \ldots, M\}$. This time the $N \times M$ matrix $B$ that represents the utility of the follower is random while the $N \times M$ matrix $A$ that represents the utility of the leader is fixed. If $i, j \in S_F = \{1, \ldots, M\}$ are two different pure strategies of the leader that are both in $BR_{u_F}(k)$, for some $k \in S_L$, then we must have $$u_F(k, i) = u_F(k, j).$$ In terms of matrices this means that $$B_{k,i} = B_{k,j}.$$ However, since $B$ is sampled from the Lebesgue measure, the probability that this holds is $0$ as $i \neq j$.

We note that Proposition~\ref{prop: BR unique} and Theorem~\ref{thm: main thm random game} now imply Theorem~\ref{thm: u_L and u_F randomized}.

\section{Counterexamples to Conjecture~\ref{conj}}
\label{sec: counterexample to conj}
The purpose of this section is to show that this Conjecture~\ref{conj} is false by constructing a class of counterexamples. 

\begin{proof}[Disproof of Conjecture~\ref{conj}] Consider the game
$$G = \begin{bmatrix}
    a,1 &  b ,0 \\
    c,1 & d ,2 \\
\end{bmatrix}$$
where $a,b,c,d \in \bR$ are real numbers satisfying
\begin{align}
a < c < d < b
\end{align}
Then the satisfaction game with $U_f^{-} = 1$ has matrix 
$$\widetilde{G} = \begin{bmatrix}
    a,1 &  b ,0 \\
    c,1  & d ,1 \\
\end{bmatrix}.$$

Let us first compute the Strong Stackelberg Equilibrium (and hence Highest Leader Payoff) in the original game $G$.

We suppose that the leader chooses the first row with probability $p \in [0,1]$ and the second row with probability $1-p$. The follower then chooses the first column with probability $q \in [0,1]$ and the second column with probability $1-q$.

The utility of the follower (in the original game $G$) is then
\begin{align*} u_F(q) = pq + (1-p)q + 2(1-p)(1-q) 
&= (2p - 1) q - 2p + 2 \end{align*}

Thus we obtain that 

\begin{itemize}
	\item If $p < 1/2$, then follower chooses $q = 0$ as a best reply.
	\item If $p > 1/2$, then follower chooses $q = 1$ as a best reply.
	\item If $p = 1/2$, then any $q \in [0,1]$ is a best reply for the follower.
\end{itemize}

The utility of the leader is $$u_L(p) = apq + bp(1-q) + c (1-p)q + d(1-p)(1-q).$$

This implies that

\begin{itemize}
	\item If $p < 1/2$, then $q = 0$ and so $$u_L(p) = bp + d(1-p) = (b-d)p + d.$$ Thus since $d< b$ this function is increasing in $p$ and so we have 
$$u_L(p) < (b-d) \cdot \frac{1}{2} + d = \frac{1}{2}(b+d).$$
	\item If $p > 1/2$, then $q = 1$ and so 
$$u_L(p) = ap + c(1-p) = (a-c)p + c $$ and since $a<c$ this function is decreasing in $p$ and so we have  
$$u_L(p) < (a-c) \cdot \frac{1}{2} + c = \frac{1}{2}(a+c).$$
	\item If $p = 1/2$ then any $q \in [0,1]$ is a best reply and $$u_L(p) = \frac{1}{2}(a+c)q + \frac{1}{2}(b+d)(1-q).$$ As we assume the follower breaks ties in favour of the leader, the follower chooses $q = 0$ since $\frac{1}{2}(a+c) \leq \frac{1}{2}(b+d)$.
\end{itemize}

This means that the unique Strong Stackelberg Equilibrium in the original game $G$ occurs at $p = 1/2$ and $q = 0$ which gives a leader utility of $$\frac{1}{2}(b+d).$$

Now let us compute the Strong Stackelberg Equilibrium in the satisfaction game $\widetilde{G}$. The utility of the follower is 
$$\widetilde{u_F}(q) = pq + (1-p)q + (1-p)(1-q) = pq + 1- p.$$

This means that when $p>0$, the best reply of the follower is $q=1$, which gives a leader utility of $$u_L(p) = ap + c(1-p) \leq c.$$

If $p=0$ then any $q \in [0,1]$ is a best reply of the follower and the leader utility is $$u_L(p) = cq + d(1-q) \leq d.$$

Thus since $c<d$, the Highest Leader Payoff in the satisfaction game is $d$ and occurs at the Strong Stackelberg Equilibrium $p = 0, q = 0$. However 
$$d < \frac{1}{2}(b+d) $$ as $b > d$ and so in fact the Highest Leader Payoff in the original game is strictly higher. \end{proof}

\section{Conclusion}
This paper isolates a strategy-level fragility in the strong Stackelberg equilibrium (SSE) that is not captured by classic value-based justifications of optimistic tie-breaking: introducing a stability notion (an SSE is \emph{unstable} if the follower has another best response at the leader’s commitment that strictly lowers the leader’s payoff), we prove a sharp generic dichotomy showing that, for Lebesgue-almost all leader payoffs (with fixed follower payoffs), the optimal Stackelberg leader strategy is unique and is either \emph{pure} or \emph{mixed and unstable}; moreover, for Lebesgue-almost all payoff pairs it is either \emph{pure and stable} or \emph{mixed and unstable}. These results complement the von Stengel--Zamir generic-value insight (optimistic and pessimistic leader payoff benchmarks coincide generically) by demonstrating that whenever optimality genuinely requires randomization, the SSE recommendation is generically non-robust to indifference and alternative best replies. We then apply this perspective to Stackelberg satisfaction games, disproving the conjecture that satisfactionization can never reduce the leader’s optimistic value via explicit counterexamples, while also showing that this conjecture holds whenever the original game admits an optimal \emph{pure} Stackelberg leader strategy—yielding an ``almost sure'' validation on the stable (typically pure) side of the generic regime. Practically, our results recommend skepticism toward optimistic SSE solutions that require genuine randomization, and motivate selecting commitments via pessimistic/robust criteria whenever indifference cannot be reliably resolved.

\bibliography{ref.bib}

\end{document}